\newtheorem{lemma}{Lemma}
\begin{document}
	%
	\title{Joint Offloading and Resource Allocation  in Vehicular Edge Computing  and Networks}


	\author{
		\IEEEauthorblockN{Yueyue Dai, Du Xu,
		}
		\IEEEauthorblockA{Key Laboratory of Optical Fiber Sensing and Communications,\\University of Electronic Science and Technology of China\\
		Email:$\{$daiyyue,xudu.uestc$\}$@gmail.com
		}
		

   \and
	\IEEEauthorblockN{Sabita~Maharjan}
	\IEEEauthorblockA{Simula Research Laboratory\\
		Norway\\
		Email: sabita@simula.no}
	\and
		\IEEEauthorblockN{Yan Zhang}
		\IEEEauthorblockA{University of Oslo\\
			Norway\\
			Email:  yanzhang@ieee.org}
			}
	
	
	%


	\maketitle

	\begin{abstract}
The emergence of computation intensive on-vehicle applications poses  a significant  challenge to provide the required computation capacity and maintain  high performance. Vehicular Edge Computing (VEC) is a new computing paradigm with a high potential to improve vehicular services by offloading computation-intensive tasks to the VEC servers. Nevertheless, as the computation resource of each VEC server is limited, offloading may not be efficient if all vehicles select the same VEC server to offload their tasks. To address this problem, in this paper, we propose  offloading with resource allocation.   We incorporate the communication and computation to derive the task processing delay. We formulate the problem as a system utility maximization problem, and then develop a low-complexity algorithm to jointly optimize  offloading decision and resource allocation.  Numerical results  demonstrate the superior performance of our Joint Optimization  of Selection and  Computation  (JOSC) algorithm compared to state of the art solutions.
	\end{abstract}

	%
	\IEEEpeerreviewmaketitle

	\section{Introduction}
The advancements in  Internet of Things (IoT) and wireless technologies  have paved a way towards realizing new applications with advanced features. For instance, on-vehicle cameras and embedded sensors,  can play a crucial role towards efficient and safe transportation systems. However, the  resource-constrained vehicles can be strained by  computation-intensive applications,  resulting in bottlenecks and making it challenging for the vehicles to ensure  the required level of Quality of  Service (QoS) \cite{wang2017survey}. Mobile Edge Computing (MEC) can alleviate the need of heavy computation from the vehicles, yet enable such applications by providing  computation capabilities at the edge of  the radio access network and in close proximity to mobile users \cite{wang2017survey},\cite{mach2017mobile}. 





Computation  offloading  is a process where mobile users offload their  computation-heavy and latency-sensitive tasks to  base stations (BSs) for edge execution \cite{mach2017mobile}.   There has been considerable amount of work focusing on computation offloading under MEC where each user independently chooses whether to execute the task locally or to offload the task to edge servers, to minimize  energy consumption and/or computation latency \cite{you2017energy}, \cite{yu2016joint},\cite{wang2017joint2},\cite{mao2016dynamic}.   In \cite{you2017energy} and \cite{yu2016joint}, the offloading problem was studied with an objective to minimize the weighted  energy consumption under the constraint on computation latency.  The authors in\cite{wang2017joint2} proposed a multi-user MEC system by integrating  a multi-antenna Access Point (AP)  with an MEC server by jointly optimizing offloading and computing. In \cite{mao2016dynamic}, a delay-optimal computation offloading algorithm was proposed by   jointly considering the offloading decision, the CPU-cycle frequencies,  and the transmit power.  In general, these schemes consider there is only one MEC server located at BS and  all mobile users have to offload their tasks to  the only one MEC server to execute computation.  
With the limitation of the amount of MEC servers, some users' tasks may not be accomplished within the permissible latency threshold.  

 Vehicular Edge Computing (VEC) is a promising new paradigm that has received much attention lately, as it can extend the computation  capability to vehicular network \cite{zhang2017mobile}.  In conventional MEC network, all mobile users have to offload their tasks to the only one MEC server located at the BS. Different from MEC, in VEC,  lightweight but ubiquitous edge resources deployed at nearby Road Side Units (RSUs) can offer high QoS.  Further, localized processing is enabled to save backhaul bandwidth and awareness about location is also beneficial  to resource allocation.    


  A few studies investigated computation offloading for vehicle networks. In \cite{Zhang2017}, the authors considered that a vehicular network consisted of  multiple VEC servers and each vehicle selected one of them as its target offloading server to maximize the utilities of both the vehicles and the computing servers. The authors in \cite{Zhang2016} proposed a similar  computation offloading strategy.   However, these schemes  may result in a severe overload as all vehicles greedily select the VEC server  with the highest utility. It further  leads to low offloading efficiency and prolongs task processing delay. To overcome  this shortcoming, we  propose an offloading scheme by balancing the load among VEC server, at the same time,  to maximize system utility. In addition, as the mobility of vehicles also has a significant  impact on task processing delay, we also incorporate the mobility aspect into problem formulation.


We propose integrating  offloading and resource allocation in order to carefully address the following critical challenges: 1) VEC server selection for offloading, 2) determining optimal computation resource to maximize system utility. To address these challenges, we consider a VEC network that multiple VEC servers equipped on RSUs locate along the road. Vehicles can select a VEC server to offload their tasks for satisfying stringent latency requirements. We model VEC server selection as a binary decision.  By jointly optimizing  selection offloading decision and computation resource allocation, we obtain the optimal strategy for maximizing system utility.  The key contributions of our work are as follows:


\begin{itemize}

\item We jointly model VEC server selection for  offloading, and resource allocation to execute tasks associated with vehicular applications.

\item We formulate the joint   offloading and resource allocation problem as a system utility maximization problem under the latency constraint.  

\item We propose a Joint Optimization of   Selection and  Computation  (JOSC) algorithm  to find the solution of the optimization problem in a distributed manner and with less overhead. 
\end{itemize}
The rest of this paper is organized as follows. In Section \ref{sm}, we introduce the system model. In Section \ref{pF}, we formulate the joint  offloading and resource allocation problem  and propose a low-complexity JOSC algorithm to solve this problem. We evaluate the performance of the JOSC algorithm and provide illustrative results in Section \ref{sr}. We conclude the paper in Section \ref{c}.

	\section{System Model }
\label{sm}

We consider a unidirectional road, where $M$ RSUs are located along the road, as shown in Fig. \ref{overview}.  The road is correspondingly divided into $M$ segments, with length $\{L_1,L_2,..,L_M\}$ respectively. Each RSU is equipped with a VEC server, whose computation resources are limited.  We denote the  id set of these RSUs as $ \mathcal{M}=\{1,...,M\}$.  

There are  $N$ vehicles arriving at the starting point of the road.  The vehicles are running at speed $v$. Each vehicle has a computation task to be completed with a stringent delay constraint. The tasks include interactive gaming,  real-time financial trading, virtual reality and etc.  Each computation task  can be described  as  $D_i\triangleq(d_i,c_i,T_{i}^{max}), i \in \mathcal{N}=\{1,2,...,N\}$, where  $d_i$ denotes the size of computation input data (e.g. the program codes and input parameters), $c_i$ is the  required computation resource for computing  task $D_i$, and $T_{i}^{max}$ denotes the  maximum latency  allowed to accomplish the task.  

Each task can either  be offloaded  to a selected VEC server to process, or be executed locally at the vehicle. Denote the selection decision variables by  $x_{ij}$ and $x_{i0}$ indicating whether the task of  vehicle $i$ is processed on the selected VEC server on RSU $j$ (i.e. $x_{ij}=1$) or locally (i.e. $x_{i0} = 1$).  The selection decision variables should satisfy the constraint $\sum_{j=0}^{M}x_{ij} =1$ which indicates  that only one of $x_{ij}$ could  be $1$.
\begin{figure}
	\centering
	\includegraphics[width =3.6 in]{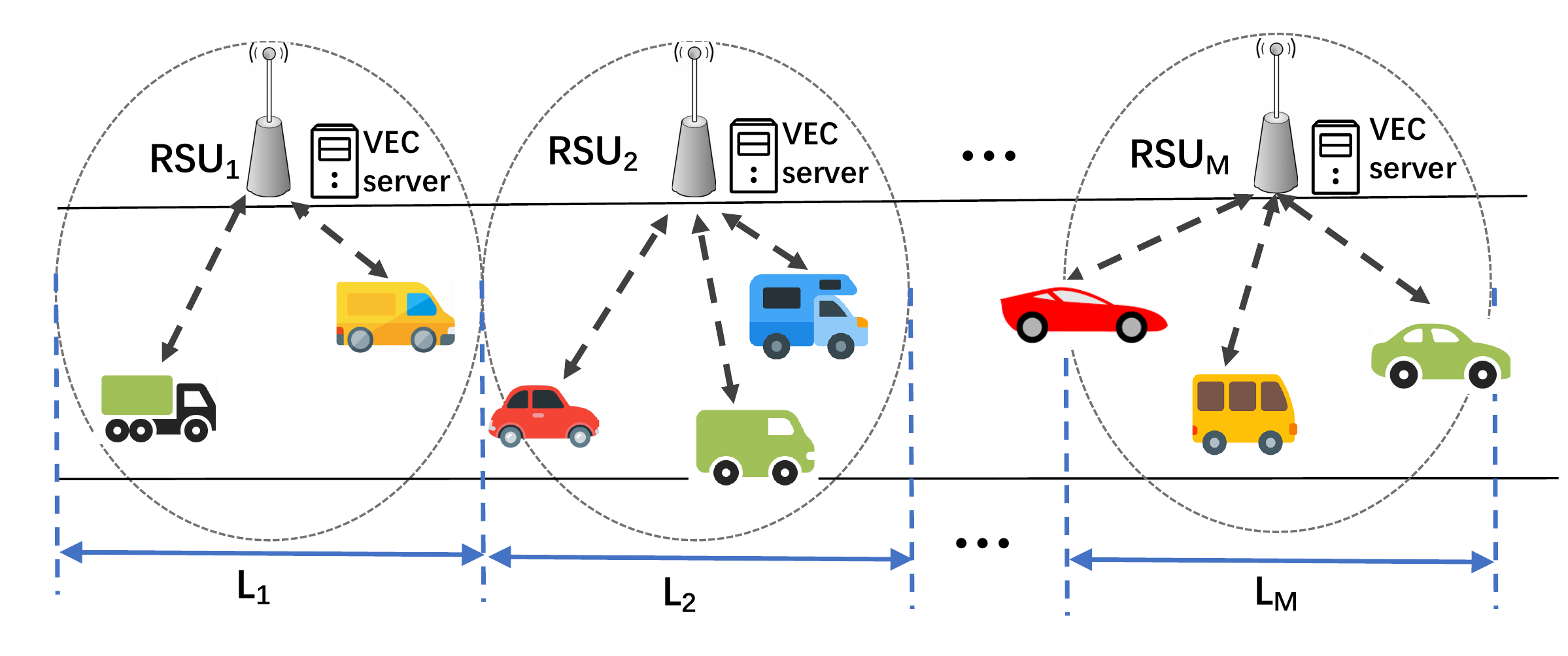}
	\caption{The VEC offloading in  a  vehicular network.}
	\label{overview}
\end{figure}

\subsection{Offloading to  VEC Server}
If vehicle $i$ chooses to offload  task $D_i$  to a selected VEC server to  process, the  task processing  delay  can be divided into four parts.  The first one is the time taken for vehicle $i$ from the starting point to  the coverage of  RSU $j$ (i.e. $\sum_{k=1}^{j}L_k)/v$). The second part is the  communication time that the task is transmitted from vehicle $i$ to RSU $j$. The third component  is the computation time.  The fourth part is  the communication time that the  result of the task is transmitted from RSU $j$ to vehicle $i$. However, since the size of the result is often much smaller than the input data size,  we do not consider the energy consumption and latency of this part \cite{chen2016efficient}, \cite{wang2017joint}, \cite{zhang2016energy}.

\emph{{1) Communication Time:}}

We assume the wireless  communication between  vehicles and  RSUs  is based on the Orthogonal Frequency Division Multiple Access (OFDMA).  The RSUs in this system are also allocated orthogonal spectrum such that the inter-cell interference among RSUs can be ignored.

Let $h_{ij}$ denote the channel gain and $p_{ij}$ the transmission power for vehicle $i$. Then the achievable rate, denoted by $r_{ij}$, is given as:
\vspace{-0.1in}
\begin{equation}
r_{ij}=B\log(1+\dfrac{p_{ij}h_{ij}}{N_0})
\end{equation}
where $N_0$ is the noise power  and $B$ is the system bandwidth.

The communication time for offloading $d_i$ from vehicle $i$ to RSU $j$ can be written as
\vspace{-0.1in}
\begin{equation}
\label{comm}
T_{ij}^{com} = \frac{d_i}{B\log(1+\dfrac{p_{ij}h_{ij}}{N_0})}
\end{equation}

\emph{{2) Computation  Time:}}

Since  multiple vehicles may choose  the same VEC server as their offloading target and the computation resource of each VEC server is limited, we need to allocate this computation resource to satisfy all vehicles' latency constraints. Denote $F_j$ as the total computation resource of VEC server on RSU $j$ and  $f_{ij}$ is the amount of  computation resource  that the VEC server assigns to vehicle $i$. We have $\sum_{i=1}^{N}x_{ij}f_{ij}\leq F_j$. The computation execution time $T_{ij}^{vec}$ will be
\vspace{-0.05in}
\begin{equation}
\label{RSU}
T_{ij}^{vec}=\dfrac{c_i}{f_{ij}}
\end{equation}

Then, the task processing delay for offloading $D_i$ from vehicle $i$ to RSU $j$ can be expressed as
\vspace{-0.05in}
\begin{equation}
T_{ij} = \sum_{k=1}^{j}\dfrac{L_k}{v}+T_{ij}^{com} +T_{ij}^{vec}
\end{equation}

\subsection{Local Processing }
If vehicle $i$ executes  its computation task $D_i$ locally, the task processing delay is determined by its own computation resource. Let $f_i$ denote the computational resource of vehicle $i$, which varies for different users and can be obtained through offline measurement \cite{miettinen2010energy}. The local computation execution delay $T_{i0}$  can now be expressed as
\vspace{-0.05in}
\begin{equation}
\label{local}
	 T_{i0}=\dfrac{c_i}{f_i}
\end{equation}


\vspace{-0.05in}

\subsection{System Utility Function}
Different from the task offloading in previous works which aim to optimize  energy consumption \cite{you2017energy}, \cite{yu2016joint},  \cite{wang2017joint2}, the task processing time is a critical metric for vehicles. We therefore design a QoS based utility function, i.e., based on the task processing time.

Since each task can  either be offloaded to a selected VEC server to process, or be executed locally, the task processing time can be expressed as 
\vspace{-0.05in}
\begin{equation}
\label{delay}
\begin{split}
T_i& =\sum_{j=1}^{M}x_{ij}T_{ij}+x_{i0}T_{i0}= \sum_{j=0}^{M}x_{ij}T_{ij}
\end{split}
\end{equation}

Due to the fact that  moving vehicles may have a higher satisfaction with a smaller $T_i$, the utility function  should monotonically decrease with $T_i$. Moreover,   because of the limitation of computation resource, offloading can be less efficient and result in overload, if all vehicles select the same VEC server to offload their task to.  The utility function also should balance the load among VEC servers. According to \cite{kelly1997charging}, the logarithmic utility is known as proportional fairness, which is able to achieve load balancing. Therefore,  we define the utility function as 
\vspace{-0.1in}
\begin{equation}
\label{u_i}
\begin{split}
U_i(x_{ij},f_{ij}) &=\alpha \log(1+\beta-T_i)\\
\end{split}
\end{equation}
where $\alpha$ is a satisfaction parameter, $\beta$ is used to normalize  the satisfaction to be nonnegative. The higher the $\alpha$, the more gain of satisfaction. 
The system utility function can be written as $U = \sum_{i=1}^{N}U_i$.
	 \section{Problem Formulation and Solution}
 
\label{pF}
In this section, we formulate the joint  offloading and resource allocation scheme as an optimization problem. The objective is to maximize the system utility. 
Define $\mathbf{x}=\{x_{ij}\}$ as the vector of VEC server selection decision and $\mathbf{f}=\{f_{ij}\}$ as the computation resource vector, respectively. We  formulate the optimization  problem as follows:
\begin{subequations}
	\label{S1}
	\begin{align}
	\centering
		\max \displaystyle 
	~~~&U(\mathbf{x},\mathbf{f}) 	= {\sum_{i=1}^{N}{\alpha \log(1+\beta-T_i)}}\\    
			s.t.~~ 
	          &~T_i
			\leqslant T_i^{max},~~\forall~ i \in \mathcal{N} \label{c1}\\ 
			&~ \sum_{j=0}^{M}x_{ij}=1, ~~~~~~~~~\forall ~i  \in \mathcal{N} \label{c2}\\    	
			&~\sum_{i=1}^{N}x_{ij}f_{ij}\leqslant F_j,~\forall ~j\in \mathcal{M}\label{c3}\\
			&~0\leqslant f_{ij} \leqslant F_j~~~~~~~\forall ~i  \in \mathcal{N} ,~j\in \mathcal{M}\label{c5}\\
			&~x_{ij}\in\{0,1\}, ~~~~\forall ~i  \in \mathcal{N} ,~j\in \mathcal{M}  \cup\{0\}  \label{c7} 
	\end{align}
\end{subequations}
The first constraint (\ref{c1}) guarantees that the task processing time  cannot exceed the maximum allowed latency  $T_i^{max}$. Constraints (\ref{c2})  and (\ref{c7}) state each vehicle offloads its task to one and only one VEC server.  Constraints (\ref{c3}) and (\ref{c5}) ensure that the sum of the computation resource assigned to all tasks, which choose the  VEC server on RSU $j$,  does not exceed the total computation capacity of this VEC server.  
The key challenge in solving this problem is the integer constraint $x_{ij}\in \{0,1\}$, which  makes  $(\ref{S1})$ a mixed-integer non-linear programming problem and this is in general  non-convex and NP-hard  \cite{boyd2004convex}.
 
 In order to solve  (\ref{S1}), we first transform  it  into an equivalent form as shown in Lemma $\ref{le1}$. 
 
\begin{lemma}
\label{le1}
	The optimization problem (\ref{S1}) can be  transformed into the following equivalent problem:
	\begin{subequations}
	\label{S2}
		\begin{align}
		\centering
		\max \displaystyle 
				~~~&U(\mathbf{x},\mathbf{f}) 	= {\sum_{i=1}^{N}\sum_{j=0}^{M}{\alpha \log(1+\beta-x_{ij}T_{ij})}}\\   
				s.t.~~ 
				          &~\sum_{j=1}^{M}x_{ij}(\Lambda_{ij}+\dfrac{c_i}{f_{ij}})\leqslant T_{i}',~~\forall~ i \in \mathcal{N} \label{c8}\\ 
				         &(\ref{c2}), (\ref{c3}), (\ref{c5}), (\ref{c7})\notag
		\end{align}
	\end{subequations}
	where $\Lambda_{ij} = \sum_{k=1}^{j}\dfrac{L_k}{v}+T_{ij}^{com}-\dfrac{c_i}{f_i}$ and  $T_i' = T_{i}^{max}-T_{i0}$.
\end{lemma}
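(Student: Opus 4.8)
The plan is to verify the two changes that distinguish (\ref{S2}) from (\ref{S1})---the rewritten objective and the replacement of the latency constraint (\ref{c1}) by (\ref{c8})---each of which is a direct consequence of the selection constraints (\ref{c2}) and (\ref{c7}), and then to observe that the two programs share the same feasible set and the same maximizers, so that one is ``equivalent'' to the other.

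First I would handle the objective. Constraints (\ref{c2}) and (\ref{c7}) force, for each vehicle $i$, exactly one index $j$ with $x_{ij}=1$; call it $j^\ast$. Then in $\sum_{j=0}^{M}\log(1+\beta-x_{ij}T_{ij})$ the single term $j=j^\ast$ equals $\log(1+\beta-T_{ij^\ast})=\log(1+\beta-T_i)$ by the definition of $T_i$ in (\ref{delay}), while each of the remaining $M$ terms equals $\log(1+\beta)$. Hence, on the whole feasible region,
\[
\sum_{j=0}^{M}\alpha\log(1+\beta-x_{ij}T_{ij})=\alpha\log(1+\beta-T_i)+\alpha M\log(1+\beta),
\]
so the objective of (\ref{S2}) is that of (\ref{S1}) plus the constant $\alpha N M\log(1+\beta)$, and a constant shift leaves the set of optimizers unchanged.

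Next I would rewrite (\ref{c1}). Using (\ref{delay}), the local delay (\ref{local}), the VEC execution time (\ref{RSU}), and the four-part expression for $T_{ij}$, one has $T_i=x_{i0}\,c_i/f_i+\sum_{j=1}^{M}x_{ij}\bigl(\sum_{k=1}^{j}L_k/v+T_{ij}^{com}+c_i/f_{ij}\bigr)$. Substituting $x_{i0}=1-\sum_{j=1}^{M}x_{ij}$ from (\ref{c2}) and collecting the coefficient of each $x_{ij}$ turns $T_i$ into $T_{i0}+\sum_{j=1}^{M}x_{ij}\bigl(\Lambda_{ij}+c_i/f_{ij}\bigr)$ with $\Lambda_{ij}=\sum_{k=1}^{j}L_k/v+T_{ij}^{com}-c_i/f_i$ exactly as in the statement; moving $T_{i0}$ to the right-hand side shows $T_i\le T_i^{max}$ is equivalent to (\ref{c8}) with $T_i'=T_i^{max}-T_{i0}$. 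Every step here is reversible once (\ref{c2}) is imposed, so (\ref{c1}) and (\ref{c8}) are interchangeable in the presence of (\ref{c2}).

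Finally, since (\ref{c2}), (\ref{c3}), (\ref{c5}), (\ref{c7}) appear in both problems and (\ref{c1}) $\Leftrightarrow$ (\ref{c8}) under (\ref{c2}), the feasible sets of (\ref{S1}) and (\ref{S2}) coincide, and by the first step their objectives differ on that set by a fixed constant; hence the two problems have identical optimal solutions. The derivation is almost entirely bookkeeping, and the only points needing a little care are that the extra $\alpha M\log(1+\beta)$ per vehicle is a genuine constant---this uses both (\ref{c2}) and the integrality (\ref{c7}), not (\ref{c2}) alone---and the implicit domain conventions, namely that $1+\beta-T_{ij}>0$ holds at the selected server so the logarithm is defined (guaranteed by (\ref{c1}) once $\beta$ is chosen large enough, consistent with normalizing $U_i\ge 0$) and that $x_{ij}\cdot c_i/f_{ij}$ is read as $0$ when $x_{ij}=0$, so a zero resource share in an unselected slot is harmless.
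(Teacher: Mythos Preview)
Your proposal is correct and follows essentially the same two-step route as the paper's proof in Appendix~\ref{appA}: use (\ref{c2})--(\ref{c7}) to rewrite the objective as a double sum, then substitute $x_{i0}=1-\sum_{j=1}^{M}x_{ij}$ into (\ref{c1}) to obtain (\ref{c8}). You are in fact more careful than the paper: you make explicit that the two objectives differ by the additive constant $\alpha N M\log(1+\beta)$ (the paper simply asserts ``equivalence'' without noting the shift), and you flag the domain conventions for the logarithm and for the $0\cdot c_i/f_{ij}$ terms.
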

\begin{proof}
See Appendix \ref{appA}.
\end{proof}

It is still challenging to  solve (\ref{S2}) which has  non-linear objective function  and integer constraint (\ref{c7}). Therefore, we  decouple  selection decision and computation resource allocation  into two  subproblems to develop a low-complexity algorithm.  That is, we determine $\mathbf{x}$ under  given  $\mathbf{f}$  and then $\mathbf{f}$    under  obtained $\mathbf{x}$, and repeat this process until convergence.

\subsection{Selective Decision}

The selection decision problem for a given $\mathbf{f}$ from (\ref{S2}) takes the form

\begin{equation}
	\begin{split}
	\max	~~~&U(\mathbf{x})  = {\sum_{i=1}^{N}\sum_{j=0}^{M}{\alpha \log(1+\beta-x_{ij}T_{ij})}}\\   
   s.t. ~~&(\ref{c2}),(\ref{c3}),(\ref{c7}),(\ref{c8})
	\end{split}
	 \label{ua}
\end{equation}

In (\ref{ua}), all the indicator variables $x_{ij}$s are binary, while $U(\mathbf{x}) $ is a non-linear function with respect to $x_{ij}$. Thus it is a Mixed Integer Non-Linear Programming problem (MINLP), which is usually NP-hard. 

To solve this problem with low complexity, we propose an approximation algorithm.  First, we construct a subset $\mathcal{B}_i$ with a latency threshold. Then, we relax the original MINLP problem as a non-linear programming problem (RNLP) and solve it using standard convex method. Finally, we use  rounding to obtain a feasible solution. 

To ensure each task of vehicles can be accomplished respecting its latency deadline, given $\mathbf{f}$, we construct  an available subset $\mathcal{B}_i $ 

\begin{equation}
\label{sub}
\mathcal{B}_i =\mathcal{M}\cap\{j| \dfrac{T_{ij}}{ T_{ij}^*}\leqslant \rho\}
\end{equation}
where $T_{ij}^*= \min_{j\in \mathcal{M}}T_{ij}$ and $\rho$ is a threshold. Usually only a limited number of RSU will satisfy the above constraint  (\ref{sub}). For the RSU that cannot meet the threshold, let $x_{ij}=0,j\notin \mathcal{B}_i$. Thus, the complexity of (\ref{ua}) will be greatly reduced. 

Once    $\mathcal{B}_i$ is determined, we relax the binary variable $x_{ij}$ into a real value in $[0,1]$. Then the MINLP problem (\ref{ua}) is relaxed into an RNLP as follows:

\begin{subequations}
  \label{ua2}
	\begin{align}
	\centering
	\max	~~~&U(\mathbf{x}) 	= {\sum_{i\in\mathcal{N}}\sum_{j\in \mathcal{B}_i}{\alpha \log(1+\beta-x_{ij}T_{ij})}}\\   
	   s.t. ~~
	    &~\sum_{j\in \mathcal{B}_i}x_{ij}(\Lambda_{ij}+\dfrac{c_i}{f_{ij}})\leqslant T_{i}',~~\forall~ i \in \mathcal{N} \\     
	   &~x_{i0}+ \sum_{j\in \mathcal{B}_i}x_{ij}=1, ~~~~~~~~~\forall ~i  \in \mathcal{N} \label{xc}\\	  
	             &x_{ij}\geqslant 0,~~~~~~\forall j\in \mathcal{B}_i\cup\{0\}, ~~~\forall i \in \mathcal{N} \label{c12}\\
	             &(\ref{c3})\notag
	\end{align}
\end{subequations}
where $  x_{ij} = 0, j\notin \mathcal{B}_i$.

Since the sum of $x_{ij}$ is already upper bounded by $1$ in   (\ref{xc}), we remove the upper bound $1$ of $x_{ij}$ and obtain the constraint (\ref{c12}). The objective function   $U(\mathbf{x})$ is  concave. Combining with the linear convex constraints, problem (\ref{ua2}) is a convex optimization problem and we can solve (\ref{ua}) to obtain an optimal fractional solution, which  is an upper bound of the original MINLP problem, because it is obtained by expanding the solution space.

We  denote the fractional solution of (\ref{ua2})  as $\mathbf{x}' = \{x_{ij}'|x_{ij}'\in [0,1]\}$.  The solution of RNLP is usually an infeasible solution to the original MINLP problem as it is fractional. Therefore, we adopt  a rounding method from \cite{shmoys1993approximation} to obtain a feasible solution for the MINLP problem. In this rounding method, a bipartite graph is constructed according to the RNLP solution, which is constructed as an undirected bipartite graph.

The rounding technique  consists of the following two steps: 1) construct a weighted bipartite graph to establish the relationship between vehicles and RSUs, 2) find a maximum matching to obtain the integer solution based on the bipartite graph.

In  step 1,  we  construct the weighted bipartite graph $\mathcal{G}(\mathcal{U},\mathcal{V},\mathcal{E})$  to establish the relationship  between vehicles and RSUs.  The set $\mathcal{U}$ represents the vehicles in the network.  The set $\mathcal{V} = \{v_{js}:j \in\mathcal{B}_i, s= 1,...,J_j\}$, where $J_j = \lceil\sum_{i =1}^{N}x_{ij}'\rceil$  implies  VEC server on RSU  $j$ serves $J_j$ vehicles. The nodes $\{v_{js:s = 1,..,J_j}\}$ correspond to RSU $j$. 
 
 The most important procedure for constructing $\mathcal{G}$ is  to set the edges and the edge weight between $\mathcal{U}$ and $\mathcal{V}$. The edges in $\mathcal{G}$ are constructed according to the following method. 
 
\begin{framed}
  Construct the edges of  bipartite graph $\mathcal{G}$: 
  

   ~~ If $J_j\leqslant 1$, there is only one node $v_{j1}$ corresponding to RSU $j$. For each $x_{ij}'>0$, add edge $(u_i,v_{j1})$ and set the weight of this edge as $ x_{ij}'$. 
   
    ~~Otherwise,
    
   ~~~1) Find the minimum index $i_s$ such that  $\sum_{i=1}^{i_s}x_{ij}'\geqslant s$.
   
~~~ 2) For $i = i_{s-1}+1,..,i_{s}-1$, and $x_{ij}'>0$, add  edge $(u_i,v_{js})$ with weight $x_{ij}'$.
  
  ~~~~3) For $i = i_s$, add edge $(u_i,v_{js})$ with weight $1-\sum_{i=1}^{i_s-1}x_{ij}'$. This ensures that the total weight of edges connecting $v_{js}$ is at most 1.
   
  ~~~ 4) If $\sum_{i=1}^{j_s} x_{ij}'>s$, add edge $(u_i,v_{j(s+1)})$ with weight $\sum_{i=1}^{i_s}x_{ij}'-s$.  
  \vspace{-0.1in}
   \end{framed}

Based on the above steps, we construct a weighted bipartite graph $\mathcal{G}(\mathcal{U},\mathcal{V},\mathcal{E})$.   In step 2, we use the Hungarian algorithm \cite{kuhn1955hungarian} to find a maximum profit matching whose total profit is the maximum among all matchings. Note that, the \emph{profit}  of each edge is defined to be $\alpha\log(1+\beta-T_{ij})$. According to the matching, we obtain the integer  selection  decision. Specifically,  if the edge $(u_i,v_{js})$ is in the matching,  set $x_{ij}= 1$; otherwise, $x_{ij}= 0$.  This maximum matching indicates a feasible solution for the MINLP problem. According to \cite{shmoys1993approximation}, the solution produced by the rounding approximation algorithm is at most $(1+\rho)$ times greater than the optimal solution.


\subsection{Optimization of  Computation Resource}

The computation resource allocation problem for a given $\mathbf{x}$ is
\begin{small}
\begin{equation}
  \label{computation}
	\begin{split}
	\centering
	\max	~&U({\mathbf{f}})	=\sum_{i=1}^{N}\sum_{j=1}^{M}\alpha \log(1+\beta-x_{ij}(\dfrac{c_i}{f_{ij}}
+\Lambda_{ij}))\\  
	   &    s.t.    
	         ~~~~(\ref{c3}),(\ref{c5}),(\ref{c8})
	\end{split}
\end{equation}
\end{small}
where $\sum_{i=1}^{N}\alpha\log(1+\beta-x_{i0}T_{i0})$ is omitted from (\ref{computation}), as it is a constant.
\begin{lemma}
\label{le3}
Problem (\ref{computation}) is a convex optimization problem.
\end{lemma}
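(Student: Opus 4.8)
The plan is to establish the two defining ingredients of a convex program for (\ref{computation}): that the objective $U(\mathbf{f})$, which is being maximized, is concave in $\mathbf{f}$, and that the feasible set cut out by (\ref{c3}), (\ref{c5}) and (\ref{c8}) is convex. Throughout, the binary variables $x_{ij}$ are fixed, so each $x_{ij}$ is a known constant in $\{0,1\}$; I would restrict attention to the variables $f_{ij}$ with $x_{ij}=1$, since the remaining $f_{ij}$ enter neither the objective nor (\ref{c8}) and appear in (\ref{c3}) only with a zero coefficient, so they are irrelevant and their implicit domain issue at $f_{ij}=0$ can be set aside.

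First I would prove concavity of the objective. Because the $f_{ij}$ for distinct pairs $(i,j)$ are independent variables and the $(i,j)$-th summand depends on $f_{ij}$ alone, it is enough to show that each summand $g(f)=\alpha\log\bigl(1+\beta-\Lambda_{ij}-c_i/f\bigr)$ is concave in $f>0$ wherever its argument is positive. Setting $a=1+\beta-\Lambda_{ij}$ and $b=c_i>0$, note that at any feasible point the requirement $a-b/f>0$ (with $f>0$) forces $af>b$ and in particular $a>0$; on the interval $(0,a/b)$ the map $t\mapsto\log(a-bt)$ is concave and nonincreasing, while $f\mapsto 1/f$ is convex on $f>0$, so by the standard composition rule (a nonincreasing concave function composed with a convex function is concave) $g$ is concave. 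Equivalently, a direct computation gives $g''(f)=-\alpha b\,(2af-b)/(af^2-bf)^2$, and since feasibility gives $af>b$ we get $2af-b>b>0$, hence $g''<0$; thus the Hessian of $U$ is diagonal and negative definite. Multiplication by $\alpha>0$ and summation over $i,j$ preserve concavity.

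Next I would verify the constraints. Constraints (\ref{c3}) and (\ref{c5}) are affine in $\mathbf{f}$, the coefficients $x_{ij}$ being constants, so each defines a convex (in fact polyhedral) set. For (\ref{c8}), each term $c_i/f_{ij}$ is convex on $f_{ij}>0$, $\Lambda_{ij}$ is a constant, and the weights $x_{ij}\ge 0$, so the left-hand side $\sum_{j}x_{ij}(\Lambda_{ij}+c_i/f_{ij})$ is a nonnegative combination of convex functions and hence convex; requiring it to be at most the constant $T_i'$ is therefore a convex constraint. The feasible region, being the intersection of these convex sets, is convex.

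Putting the pieces together, (\ref{computation}) maximizes a concave function over a convex set — equivalently, it minimizes the convex function $-U(\mathbf{f})$ subject to convex constraints — so it is a convex optimization problem. The one step that needs genuine care, rather than a purely formal manipulation, is the concavity of the objective: the function $f\mapsto\log(a-b/f)$ is \emph{not} concave for arbitrary $a$, so one must first invoke feasibility to guarantee $a>0$ (equivalently $af>b$) before the composition rule or the sign of $g''$ can be applied; this is the main obstacle, and it is resolved purely by the structure of the latency constraint (\ref{c8}), which keeps every log-argument positive.
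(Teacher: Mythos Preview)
Your argument is correct and follows the same overall strategy as the paper: show the objective is concave in $\mathbf{f}$ by a second-derivative computation, then argue the feasible set is convex. You in fact go a bit further than the paper in two respects. First, you supply the composition-rule justification in addition to the direct computation of $g''$, and you make explicit the point that concavity of $f\mapsto\log(a-b/f)$ hinges on $a>0$, which you extract from feasibility; the paper simply asserts the log-argument is positive. Second, you correctly treat constraint~(\ref{c8}) as a convex (but \emph{nonlinear}) constraint via the convexity of $c_i/f_{ij}$ on $f_{ij}>0$, whereas the paper's appendix somewhat loosely calls all constraints ``linear''. So your route is the same in spirit but slightly more rigorous.
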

\begin{proof}
See Appendix \ref{appB}.
\end{proof}

Since  (\ref{computation}) is a convex problem, we use  Lagrangian method to solve this problem.
The Lagrangian function is 
\begin{small}
\begin{equation}
    	\begin{split}
    	\mathcal{L}(\mathbf{f},\theta,\psi,\varrho,\omega) =& \sum_{i=1}^{N}\sum_{j=1}^{M}\alpha \log(1+\beta-x_{ij}(\dfrac{c_i}{f_{ij}}+\Lambda_{ij}))-\\&\sum_{i=1}^{N}\theta_i(\sum_{j=1}^{M}x_{ij}(\Lambda_{ij}+\dfrac{c_i}{f_{ij}})- T_i')-\\&\sum_{j=1}^{M}\psi_j(\sum_{i=1}^{N}x_{ij}f_{ij}-F_j)-\sum_{i=1}^{N}\sum_{j=1}^{M}\varrho_{ij}(f_{ij}-F_j)\\&+\sum_{i=1}^{M}\sum_{j=1}^{M}\omega_{ij}f_{ij}
    	\end{split}
    \end{equation}
\end{small}
where $\theta$, $\psi$, $\varrho$, and $\omega$ are the Lagrangian multipliers.  The Lagrange dual function is then given by:
\begin{equation}
\mathcal{D}(\theta,\psi,\varrho,\omega) = \max~ \mathcal{L}(\mathbf{f},\theta,\psi,\varrho,\omega)
\end{equation}
and the dual problem of (\ref{computation}) is
\begin{equation}
\label{dual}
\begin{split}
\min ~~~~&\mathcal{D}(\theta,\psi,\varrho,\omega)\\
s.t. ~~~&\theta\succ0, ~\psi,~\varrho,~\omega\succeq 0
\end{split}
\end{equation}

Since  (\ref{computation}) is convex, there exists a strictly feasible point where Slater's condition holds, leading to strong duality \cite{boyd2004convex}. This allows us to solve the primal problem (\ref{computation}) via the dual problem (\ref{dual}). The dual problem (\ref{dual}) can be solved using the gradient method. As the Lagrange function is differentiable, the gradients of  the Lagrange multipliers can be obtained as
\begin{equation}
\begin{split}
&\dfrac{\partial \mathcal{L}}{\partial \theta_i}   = -(\sum_{j=1}^{M}x_{{{\it ij}}} (\Lambda_{ij}+ {\dfrac {c_{{i}}}{f_{ij}}})-T_i')\\
&\dfrac{\partial \mathcal{L}}{\partial \psi_j}   =-(\sum_{i=1}^{N}x_{{{\it ij}}}f_{ij}-F_{{j}})\\
&\dfrac{\partial \mathcal{L}}{\partial \varrho_{ij}}   =-(f_{ij}-F_{{j}})\\
&\dfrac{\partial \mathcal{L}}{\partial \omega_{ij}}   =f_{ij}
\end{split}
\end{equation}

By applying the gradient method, the Lagrange multipliers are calculated iteratively as follows:
\begin{equation}
\label{iterative}
\begin{split}
&\theta_i(t+1) =\left[ \theta_i(t)+\kappa_1\dfrac{\partial \mathcal{L}}{\partial \theta_i} \right]^+\\
&\psi_j(t+1) =\left[ \psi_j(t)+\kappa_2\dfrac{\partial \mathcal{L}}{\partial \psi_j} \right]^+\\
&\varrho_{ij}(t+1) =\left[\varrho_{ij}(t)+\kappa_3\dfrac{\partial \mathcal{L}}{\partial\varrho_{ij}} \right]^+\\
&\omega_{ij}(t+1) =\left[\omega_{ij}(t)+\kappa_4\dfrac{\partial \mathcal{L}}{\partial\omega_{ij}} \right]^+\\
\end{split}
\end{equation}
where $\kappa_1,\kappa_2,\kappa_3,\kappa_4>0$ are the gradient steps, $t$ represents the gradient number, and $[\cdot]^+$ denotes $\max(0,\cdot)$.  

Taking   the first-order derivative of  $\mathcal{L}$ with respect to $f_{ij}$ and setting the result to zero, we obtain,
\begin{small}
\begin{equation}
\label{f_op}
\begin{split}
\dfrac{\partial \mathcal{L}}{\partial f_{ij}}   = &{\dfrac {\alpha\,x_{{{\it ij}}}c_{{i}}}{f_{ij}^{2}\ln 
 \left( 2 \right)  \left( 1+\beta-x_{ij}
 \left( {\dfrac {c_{{i}}}{f_{ij}}}+\Lambda _{{{\it ij}}} \right)  \right) }}\\&+{\dfrac {\theta_i\,x_{{{\it ij}}}
c_{{i}}}{f_{ij}^{2}}}-\psi_{{j}}x_{{{\it ij}}}-\varrho_{ij}+\omega_{ij}
=0
\end{split}
\end{equation}
\end{small}

\subsection{ Joint Algorithm for Selection Decision and Computation Resource}
Based on above analysis, we present our joint algorithm for  selection decision and computation resource (JOSC), which is summarized in Algorithm \ref{outer}. According to  JOSC,  (\ref{S2}) can be solved in a semi-distributed manner.  Specifically,  based on  given  $\mathbf{f}$,  each vehicle obtains its  selection decision by solving relaxation problem (\ref{ua2}) and using rounding method. Then  under the obtained selection decision and  by exchanging computation resources among neighboring vehicles,  each vehicle uses Lagrangian method to obtain $\mathbf{f}$  and repeat this process until convergence. 

At  each iteration of Algorithm \ref{outer}, the computational complexity of solving convex problem (\ref{ua2}) is only polynomial in the number of variables and constraints.  The complexity required to solve (\ref{ua2}) is thus $\mathcal{O}((1+a+b)a^2\sqrt{b+1})$, where $a = N*M'$ is the number of decision variables, $ b  =3N+M+M'$ is the number of  linear constraints and  $M' =|\mathcal{B}_i| $  ( $|\cdot|$ denotes the cardinality of a set).  The complexity of rounding is  polynomial in the number of nodes and edges,  that is $\mathcal{O(|\mathcal{V}||\mathcal{E}|)}$. For $t$ iterations, the complexity of the inner loop of Algorithm \ref{outer} is $\mathcal{O}(t)$. Thus,  the total complexity of Algorithm \ref{outer}, for $k$ iterations, is $ \mathcal{O}(k(t+(1+a+b)a^2\sqrt{b+1}+|\mathcal{V}||\mathcal{E}|)$.

\begin{algorithm}[!h]

	\caption{ {J}oint  Optimization for {S}election and {C}omputation algorithm  (JOSC)}
		\label{outer}
	\begin{algorithmic}[1]
		\State  \text{Initialization}:
		   \State   ~~~Set  selection decision  $\mathbf{x}^{(0)}$ and computation resource  $\mathbf{f}^{(0)}$ to arbitrary values;
		       \State   ~~  Set the number of iteration as $k = 0$;
		\Repeat
		\State Based on $\mathbf{f}^{k-1}$, each vehicle obtains its  selection decision by solving (\ref{ua2}) and using rounding method;
		 \Repeat \label{5}
		 \State Update $\theta(t+1)$, $\psi(t+1)$, $\varrho(t+1)$ and $\omega(t+1)$  based on (\ref{iterative});
		 \State Calculate $\mathbf{f}^{(k)}$ using  (\ref{f_op});
		 \Until {Convergence.} \label{8}
		   		 \State k = k+1;
		   		 \Until {Convergence.}
	\end{algorithmic}
\end{algorithm}



	\section{Numerical Results}
\label{sr}
In this section, we present extensive simulation results to evaluate the performance of the proposed algorithm.
\subsection{Simulation Setup}
We consider a unidirectional road,  where $5$ RSUs are randomly located along a $100$-meter road. There are 40 arriving vehicles on the road, and they are running at a constant speed of $120$ km/hr. The bandwidth of each RSU is $1.25$ KHz.  The transmission power of each vehicle is 100 $mW$ and the noise power is $10^{-10} mW$.  We set the channel gain $h_{ij} = d_{ij}^{-4}$, where $d_{ij}$ is the distance between vehicle $i$ and RSU $j$.

Each RSU is equipped with a VEC server.  The computation resources of the VEC servers from the beginning of the road to the other end are $\{5,10,15,20,25\}$ GHz. Each vehicle has a computation task. The input data size,  required computation resources, and  maximum latency constraint of each computation task are uniformly distributed in the range of $U[100,300]$ KB, $U[0.5, 1.5]$ GHz, and  $U[8,10]$ s, respectively. The computation capability of each vehicle is $1$ GHz.

To verify the performance of our proposed JOSC algorithm, we introduce the following benchmark schemes,
\begin{itemize}
\item GS: Given a feasible computation resource, the Greedy Selection optimization scheme (GS)  greedily picks out the best VEC server  with maximal utility for each vehicle.
\item RA: The Resource Allocation scheme (RA)  optimizes computation resource, as in \cite{yu2016joint}. In this scheme, all vehicles offload their tasks to the nearest VEC server.
\end{itemize}

\subsection{Performance Analysis}

\begin{figure}
	\centering
		\subfigure[Computation resource]{
		 	 	 		\includegraphics[width=3.2 in]{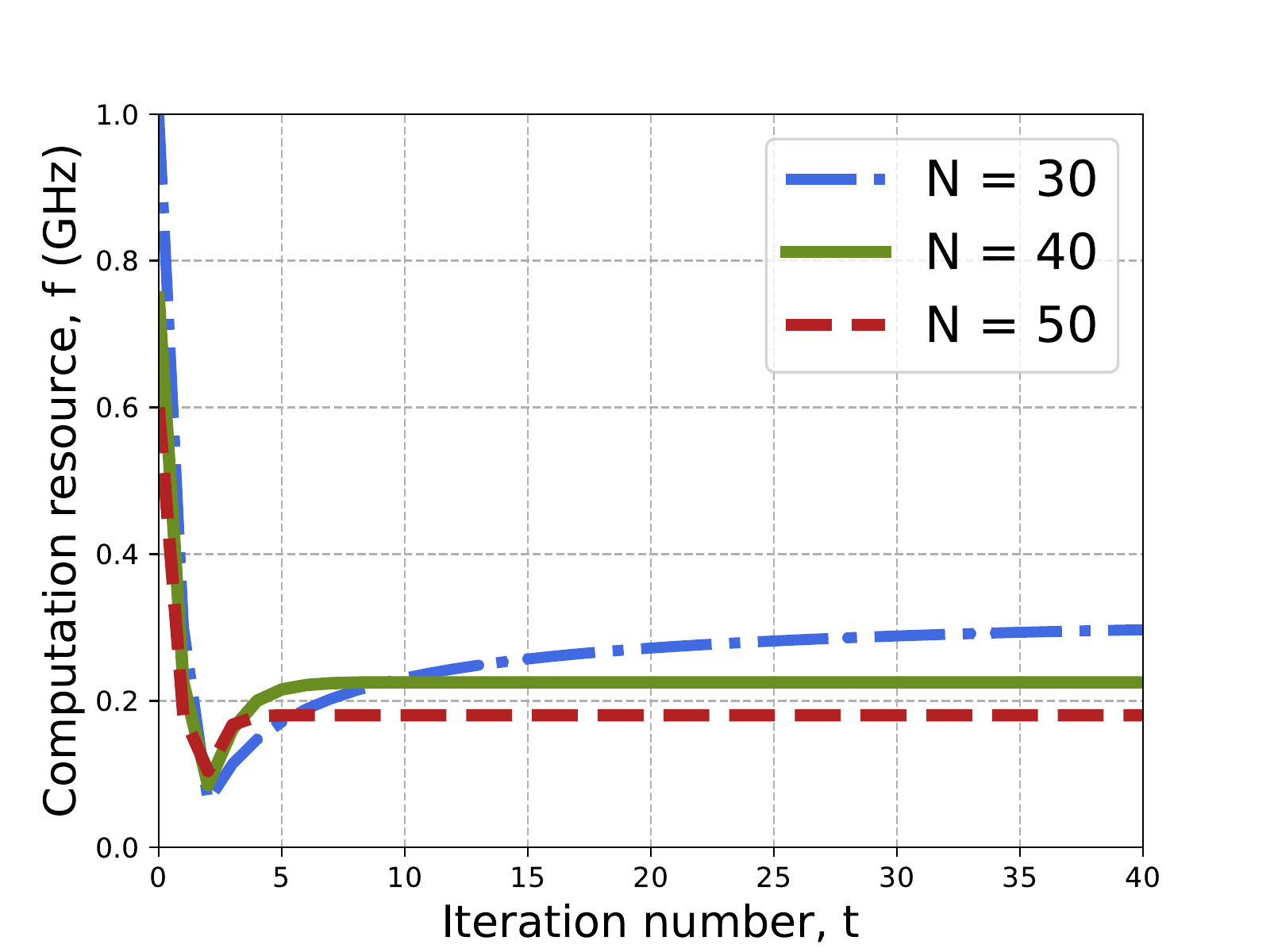}
		 	 	 		\label{fig2}
		 	 	 	}	 	 	 	
	\hspace{-0.325in}	 		
	\subfigure[System utility]{
	 		\includegraphics[width=3.2 in]{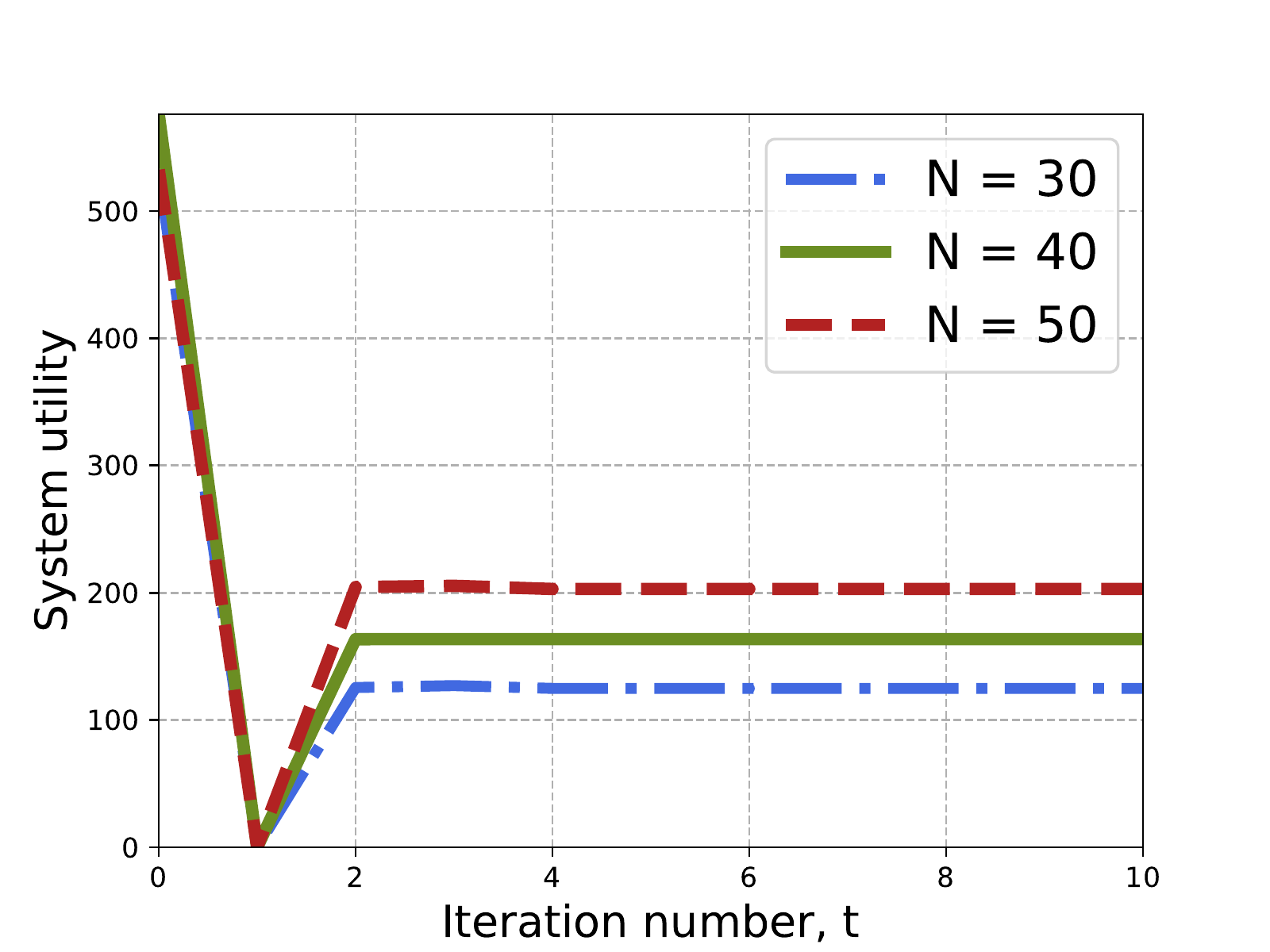}
	 		\label{fig5}
	 	}
	 	 	\caption{Convergence of JOSC under different  $N$.}

	\label{sfig}
\end{figure}

Fig. \ref{sfig} shows the convergence of our proposed JOSC algorithm. Specifically, Fig. \ref{fig2} plots the convergence of the inner loop of Algorithm \ref{outer}, i.e. the loop from step \ref{5} to step \ref{8} in JOSC.  We observe that the computation resource  can achieve converge very fast and a smaller value of  $N$  results in more computation resource.  From Fig. \ref{fig5}, we can see that the convergence of  system utility  can achieve converge within 4 iterations and  a larger value of  $N$  leads to  higher system utility. 

  Fig. \ref{fig3} shows system utility with respect to the number of arriving vehicles at the starting point  under different schemes. We can draw several observations from   Fig. \ref{fig3}. First, the system utility of the proposed JOSC  is obviously higher than GS and RA since it  jointly optimizes VEC server selection  and computation resource allocation. Second,  the system utility of  JOSC and GS increases rapidly  as the number of arriving vehicles increases while the system utility of  RA starts decreasing when the number of arriving vehicles is larger than $45$. The reason is that RA does not consider VEC server selection  which prolongs task processing time and decreases system utility. On the contrary, JOSC and GS improves system utility by performing selection decision. Moreover,  the system utility gap between JOSC and GS increases with increasing the number of vehicles. This is justified since the JOSC adopts the optimal policies of selection decision and computation resource but  GS only considers selection decision.

 Fig. \ref{fig4} depicts  load balancing performance   of JOSC, RA and GS for  the same profile (i.e. $N=40,M=4$). We can see that  the performance of JOSC is significantly better than that of GS and RA and RA does not completely balance the load among servers. While GS greedily chooses the VEC server which provides maximal utility, RA offloads  all tasks to the nearest VEC server (i.e. Server 1) which leads to an unbalanced assignment of resources.  On the contrary,  JOSC balances the number of vehicles served by each server and keeps a higher system utility. 
  
   \begin{figure}
   	\centering
   	\includegraphics[width =3.2 in]{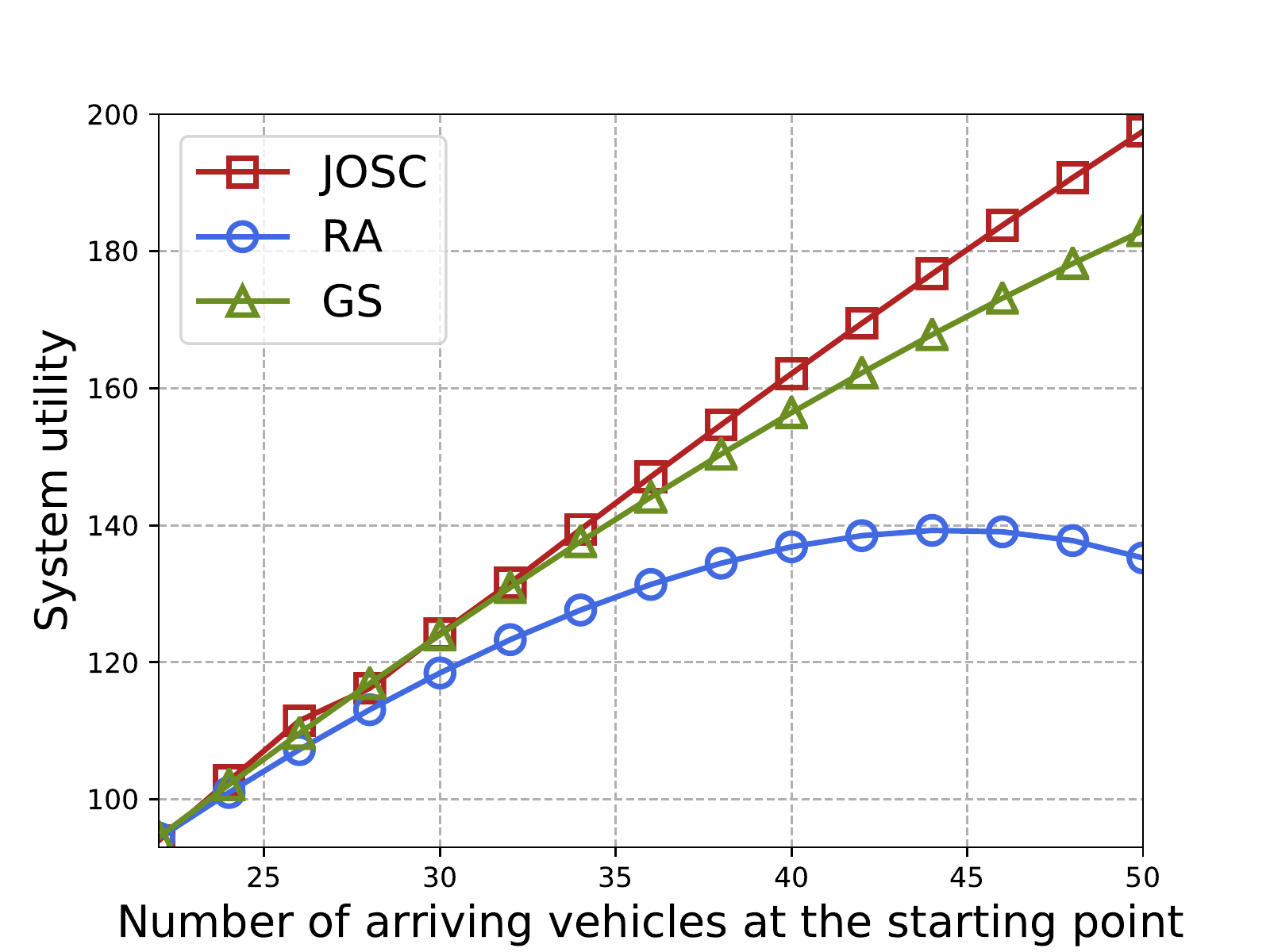}
   	\caption{Comparison of system utility of the number of arriving vehicles at the starting point under different schemes.}
   	\label{fig3}
   \end{figure}
 \begin{figure}
 	\centering
 	\includegraphics[width =3.2 in]{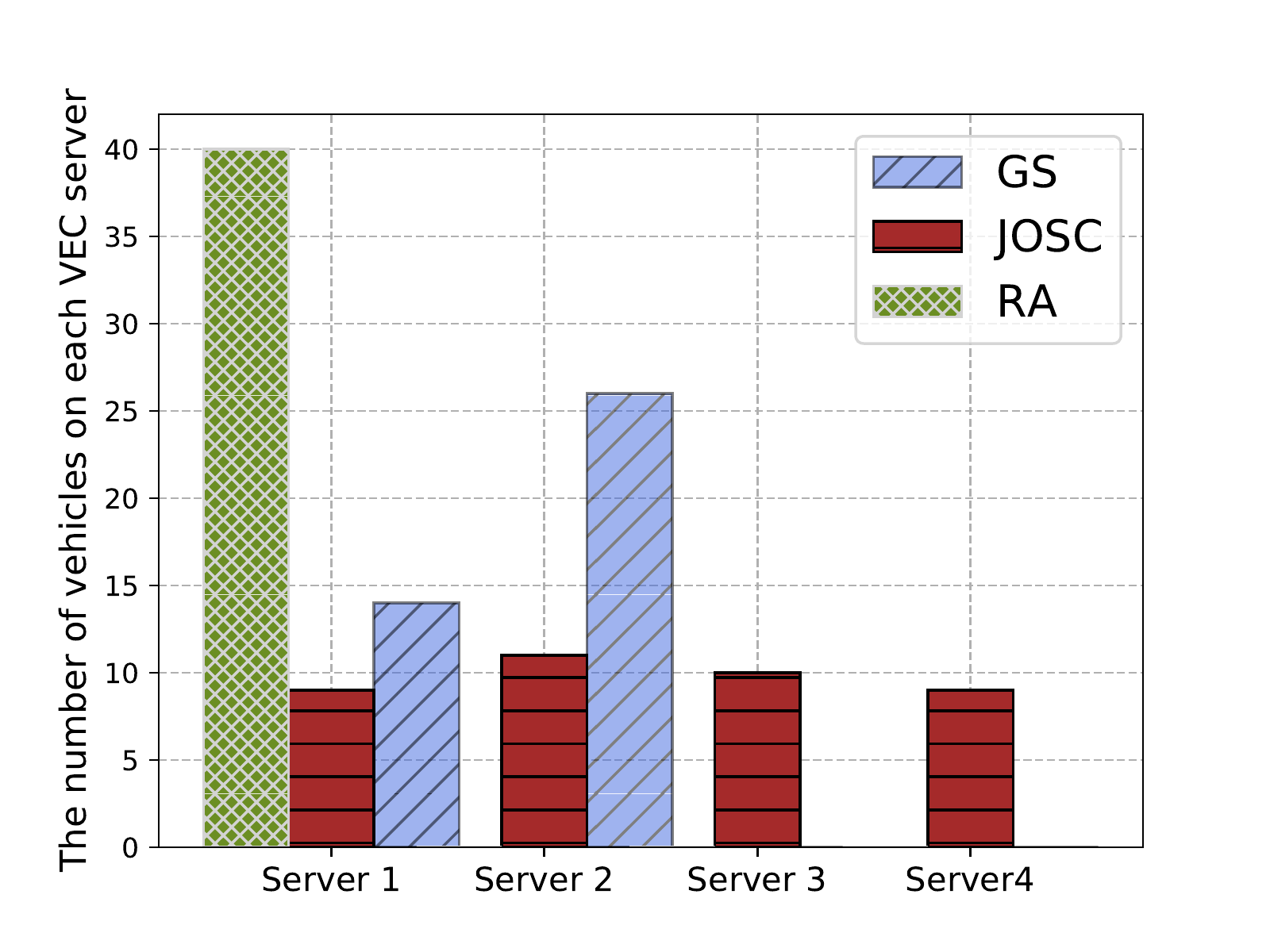}
 	\caption{Comparison of load balance  under different schemes with $N=40$ and $M=4$.}
 	\label{fig4}
 \end{figure}

	\section{Conclusions}
\label{c}
In this paper, we proposed a joint  offloading and resource allocation  approach for maximizing system utility in vehicular edge computing networks. We incorporated the components due to both communication and computation in the definition of the task processing delay.  We provided a low complexity, JOSC, by jointly optimizing selection offloading decision  and computation resource. Numerical results demonstrated that the proposed JOSC not only significantly outperforms the benchmark policies  in terms of system utility, but also performs better on balancing the load distribution, compared to the other counterparts.  
	
\begin{appendices}
\section{Proof of Lemma 1}	\label{appA}

According to constraints  (\ref{c2}) and (\ref{c7}), the objective function  ${\sum_{i=1}^{N}{\alpha \log(1+\beta-\sum_{j=0}^{M}x_{ij}T_{ij})}} $ is equivalent to  ${\sum_{i=1}^{N}\sum_{j=0}^{M}{\alpha \log(1+\beta-x_{ij}T_{ij})}}$.

Substituting (\ref{RSU})-(\ref{delay}) into (\ref{c1}), with $x_{i0}=1-\sum_{j=1}^{M}x_{ij}$, we have
\begin{equation}
\begin{split}
&\sum_{j=1}^{M}x_{ij}(T_{ij}-T_{i0})\leqslant T_{i}^{max}-T_{i0}\\
&\sum_{j=1}^{M}x_{ij}(\sum_{k=1}^{j-1}\frac{L_k}{v}+T_{ij}^{com}+\frac{c_i}{f_{ij}}-\frac{c_i}{f_i})\leqslant T_{i}^{max}-T_{i0}\\
&\sum_{j=1}^{M}x_{ij}(\varLambda_{ij}+\frac{c_i}{f_{ij}})\leqslant T_{i}'
\end{split}
\end{equation}
where $\varLambda_{ij} = \sum_{k=1}^{j-1}\dfrac{L_k}{v}+T_{ij}^{com}-\frac{c_i}{f_i}$ and  $T_i' = T_{i}^{max}-T_{i0}$. Thus, we obtain  constraint (\ref{c8}).

\section{Proof of Lemma 2}
\label{appB}
The second-order derivative of $U(\mathbf{f})$ with respect to $f_{ij}$ is
\begin{equation}
\begin{split}
\dfrac{\partial^2 U(\mathbf{f})}{\partial f_{ij}^2} = &-\,{\frac {2\alpha\,x_{{{\it ij}}}c_{{i}}}{f_{ij}^{3}\ln  \left( 2
 \right) } \left( 1+\beta-\frac{x_{ij}c_i}{f_{ij}}-x_{{{\it 
ij}}}\Lambda_{{{\it ij}}} \right) ^{-1}}\\&-{\frac {\alpha\,{x_{{{\it ij}
}}}^{2}{c_{{i}}}^{2}}{f_{ij}^{4}\ln  \left( 2 \right) } \left( 1+\beta-
\frac{x_{ij}c_i}{f_{ij}}-x_{{{\it ij}}}\Lambda_{{{\it ij}}}
 \right) ^{-2}}
\end{split}
\end{equation}
As $\alpha>0$, $x_{ij}c_i\geqslant 0$, and $\left( 1+\beta-\dfrac{x_{ij}c_i}{f_{ij}}-x_{{{\it 
ij}}}\Lambda_{{{\it ij}}} \right) > 0$, we have $\dfrac{\partial^2 U(\mathbf{f})}{\partial f_{ij}^2}\leqslant 0 $. Thus the objective function   $U(\mathbf{f})$ is  concave. Combining with the linear convex constraints,  (\ref{computation}) is a convex optimization problem.
\end{appendices}

	\bibliographystyle{IEEEtran}
	\bibliography{reference}

\begin{thebibliography}{10}
\providecommand{\url}[1]{#1}
\csname url@samestyle\endcsname
\providecommand{\newblock}{\relax}
\providecommand{\bibinfo}[2]{#2}
\providecommand{\BIBentrySTDinterwordspacing}{\spaceskip=0pt\relax}
\providecommand{\BIBentryALTinterwordstretchfactor}{4}
\providecommand{\BIBentryALTinterwordspacing}{\spaceskip=\fontdimen2\font plus
\BIBentryALTinterwordstretchfactor\fontdimen3\font minus
  \fontdimen4\font\relax}
\providecommand{\BIBforeignlanguage}[2]{{%
\expandafter\ifx\csname l@#1\endcsname\relax
\typeout{** WARNING: IEEEtran.bst: No hyphenation pattern has been}%
\typeout{** loaded for the language `#1'. Using the pattern for}%
\typeout{** the default language instead.}%
\else
\language=\csname l@#1\endcsname
\fi
#2}}
\providecommand{\BIBdecl}{\relax}
\BIBdecl

\bibitem{wang2017survey}
S.~Wang, X.~Zhang, Y.~Zhang, L.~Wang, J.~Yang, and W.~Wang, ``A survey on
  mobile edge networks: Convergence of computing, caching and communications,''
  \emph{IEEE Access}, 2017.

\bibitem{mach2017mobile}
P.~Mach and Z.~Becvar, ``Mobile edge computing: A survey on architecture and
  computation offloading,'' \emph{IEEE Commun. Surveys Tuts.}, vol.~19, no.~3,
  pp. 1628--1656, 2017.

\bibitem{you2017energy}
C.~You, K.~Huang, H.~Chae, and B.-H. Kim, ``Energy-efficient resource
  allocation for mobile-edge computation offloading,'' \emph{IEEE Trans.
  Wireless Commun.}, vol.~16, no.~3, pp. 1397--1411, 2017.

\bibitem{yu2016joint}
Y.~Yu, J.~Zhang, and K.~B. Letaief, ``Joint subcarrier and cpu time allocation
  for mobile edge computing,'' in \emph{Proc. IEEE Global Commun. Conf.
  (GLOBECOM)}, 2016.

\bibitem{wang2017joint2}
F.~Wang, J.~Xu, X.~Wang, and S.~Cui, ``Joint offloading and computing
  optimization in wireless powered mobile-edge computing systems,'' \emph{IEEE
  Trans. Wireless Commun.}, 2017.

\bibitem{mao2016dynamic}
Y.~Mao, J.~Zhang, and K.~B. Letaief, ``Dynamic computation offloading for
  mobile-edge computing with energy harvesting devices,'' \emph{IEEE J. Sel.
  Areas Commun.}, vol.~34, no.~12, pp. 3590--3605, 2016.

\bibitem{zhang2017mobile}
K.~Zhang, Y.~Mao, S.~Leng, Y.~He, and Y.~Zhang, ``Mobile-edge computing for
  vehicular networks: A promising network paradigm with predictive
  off-loading,'' \emph{IEEE Veh. Technol. Mag.}, vol.~12, no.~2, pp. 36--44,
  2017.

\bibitem{Zhang2017}
K.~Zhang, Y.~Mao, S.~Leng, S.~Maharjan, and Y.~Zhang, ``Optimal delay
  constrained offloading for vehicular edge computing networks,'' in
  \emph{Proc. IEEE Int. Conf. Commun. (ICC)}, 2017.

\bibitem{Zhang2016}
K.~Zhang, Y.~Mao, S.~Leng, A.~Vinel, and Y.~Zhang, ``Delay constrained
  offloading for mobile edge computing in cloud-enabled vehicular networks,''
  in \emph{Proc. 8th Int. Workshop Resilient Netw. Design Modeling (RNDM)},
  2016.

\bibitem{chen2016efficient}
X.~Chen, L.~Jiao, W.~Li, and X.~Fu, ``Efficient multi-user computation
  offloading for mobile-edge cloud computing,'' \emph{IEEE/ACM Trans. Netw.},
  vol.~24, no.~5, pp. 2795--2808, 2016.

\bibitem{wang2017joint}
C.~Wang, F.~R. Yu, C.~Liang, Q.~Chen, and L.~Tang, ``Joint computation
  offloading and interference management in wireless cellular networks with
  mobile edge computing,'' \emph{IEEE Trans. Veh. Technol.}, vol.~66, no.~8,
  pp. 7432--7445, 2017.

\bibitem{zhang2016energy}
K.~Zhang, Y.~Mao, S.~Leng, Q.~Zhao, L.~Li, X.~Peng, L.~Pan, S.~Maharjan, and
  Y.~Zhang, ``Energy-efficient offloading for mobile edge computing in 5g
  heterogeneous networks,'' \emph{IEEE Access}, vol.~4, pp. 5896--5907, 2016.

\bibitem{miettinen2010energy}
A.~P. Miettinen and J.~K. Nurminen, ``Energy efficiency of mobile clients in
  cloud computing.'' in \emph{Proc. 2nd USENIX Conf. Hot Topics Cloud
  Computing}, vol.~10, 2010.

\bibitem{kelly1997charging}
F.~Kelly, ``Charging and rate control for elastic traffic,'' \emph{Eur. Trans.
  Emerging Telecommun.}, vol.~8, no.~1, pp. 33--37, 1997.

\bibitem{boyd2004convex}
S.~Boyd and L.~Vandenberghe, \emph{Convex optimization}.\hskip 1em plus 0.5em
  minus 0.4em\relax Cambridge university press, 2004.

\bibitem{shmoys1993approximation}
D.~B. Shmoys and {\'E}.~Tardos, ``An approximation algorithm for the
  generalized assignment problem,'' \emph{Math. Program.}, vol.~62, no. 1-3,
  pp. 461--474, 1993.

\bibitem{kuhn1955hungarian}
H.~W. Kuhn, ``The hungarian method for the assignment problem,'' \emph{Naval
  Res. Logist. (NRL)}, vol.~2, no. 1-2, pp. 83--97, 1955.

\end{thebibliography}
\end{document}